\newcommand{\algrule}[1][.2pt]{\par\vskip.5\baselineskip\hrule height #1\par\vskip.5\baselineskip}
{\bfseries}{\rmfamily}
\newtheorem{mycorollary}{Corollary}{\bfseries}{\rmfamily}
\newtheorem{mylemma}{Lemma}{\bfseries}{\rmfamily}
\newcommand{\as}{\ensuremath {\leftarrow}{\xspace}}
\newcommand{\PQTor}{\textit{PQ-Tor}}
\newcommand{\puzgen}{\ensuremath {\texttt{Puzzle.Gen}{\xspace}}}
\newcommand{\pow}{\ensuremath {\texttt{Puzzle.PoW}{\xspace}}}
\newcommand{\powverif}{\ensuremath {\texttt{PoW.Verify}{\xspace}}}
\newcommand{\createtoken}{\ensuremath {\texttt{Create.Token}{\xspace}}}
\newcommand{\query}{\ensuremath {\texttt{Query}{\xspace}}}
\newcommand{\response}{\ensuremath {\texttt{Response}{\xspace}}}
\newcommand{\client}{\ensuremath {\texttt{Client}{\xspace}}}
\newcommand{\PSB}{\ensuremath {\texttt{PSB}{\xspace}}}
\newcommand{\blockrec}{\ensuremath {\texttt{Block.Reconstruction}{\xspace}}}
\newcommand{\kyberdecap}{\ensuremath {\texttt{Kyber.Decap}{\xspace}}}
\newcommand{\kyberencap}{\ensuremath {\texttt{Kyber.Encap}{\xspace}}}
\newcommand{\kyberkeygen}{\ensuremath {\texttt{Kyber.KeyGen}{\xspace}}}
\newcommand{\dilithiumverify}{\ensuremath {\texttt{Dilith.Verify}{\xspace}}}
\newcommand{\dilithiumsign}{\ensuremath {\texttt{Dilith.Sign}{\xspace}}}
\newcommand{\dilithiumkeygen}{\ensuremath {\texttt{Dilith.KeyGen}{\xspace}}}
\newcommand{\pacdosq}{\ensuremath {\texttt{PACDoSQ}{\xspace}}}
\newcommand{\dbsetup}{\ensuremath {\texttt{DB.Setup}{\xspace}}}
\def\BibTeX{{\rm B\kern-.05em{\sc i\kern-.025em b}\kern-.08em
    T\kern-.1667em\lower.7ex\hbox{E}\kern-.125emX}}
\begin{document}

\title{Privacy-Preserving and Post-Quantum Counter Denial of Service Framework for Wireless Networks\\
}

\author{\IEEEauthorblockN{Saleh Darzi}
\IEEEauthorblockA{\textit{University of South Florida}, Tampa, USA \\
salehdarzi@usf.edu}
\and
\IEEEauthorblockN{Attila Altay Yavuz}
\IEEEauthorblockA{\textit{University of South Florida}, Tampa, USA \\
attilaayavuz@usf.edu}
}

\maketitle
\begin{abstract}
As network services progress and mobile and IoT environments expand, numerous security concerns have surfaced for spectrum access systems (SASs). The omnipresent risk of Denial-of-Service (DoS) attacks and raising concerns about user privacy (e.g., location privacy,  anonymity) are among such cyber threats. These security and privacy risks increase due to the threat of quantum computers that can compromise long-term security by circumventing conventional cryptosystems and increasing the cost of countermeasures. While some defense mechanisms exist against these threats in isolation, there is a significant gap in the state of the art on a holistic solution against DoS attacks with privacy and anonymity for spectrum management systems, especially when post-quantum (PQ) security is in mind. 
In this paper, we propose a new cybersecurity framework \pacdosq, which is the first to offer location privacy and anonymity for spectrum management with counter DoS and PQ security simultaneously. Our solution introduces the private spectrum bastion concept to exploit existing architectural features of SASs and then synergizes them with multi-server private information retrieval and PQ-secure Tor to guarantee a location-private and anonymous acquisition of spectrum information together with hash-based client-server puzzles for counter DoS. We prove that \pacdosq~achieves its security objectives, and show its feasibility via a comprehensive performance evaluation.  

\end{abstract}

\begin{IEEEkeywords}
Spectrum Management, Post-Quantum Security, Counter DoS, Privacy and Anonymity 
\end{IEEEkeywords}

\section{Introduction} \label{sec:Intro}
With the progression of wireless network services such as 5G/6G, coupled with the rapid expansion of mobile and IoT applications, the significance and frequency of threats to such services, specifically spectrum access systems (SAS) are escalating \cite{grissa2021anonymous}. Among these threats, the omnipresence of Denial of Service (DoS) attacks is becoming increasingly sophisticated and executable, due to the availability of open-source software, enhanced processing capabilities, and the proliferation of inexpensive devices. DoS attacks are particularly applicable to emerging wireless networked systems because of their inherent broadcast nature, spectrum access, and geolocation database requisites \cite{chakraborty2023capow}.  Wireless spectrum access, despite its merits, also brings profound privacy concerns for its users. More specifically, the continuous reporting of spectrum and location data to geo-location database servers raises numerous privacy concerns \cite{jasim2021cognitive}. Finally, the emergence of quantum computers poses a significant risk to the long-term security and privacy preservation of these next-generation networks, challenging existing classical security countermeasures \cite{darzi2023envisioning}.  
Efforts are underway to address security issues including counter-DoS, privacy, and PQ threats in SAS. However, existing solutions work in isolation, and do not comprehensively and effectively tackle these issues simultaneously. We outline some of the most relevant efforts to our work below. 


\vspace{-2mm}
\subsection{Related Work} \label{subsec:literature} 
{\em Counter-DoS and Spectrum Management for NextG Networks}: 
The widespread growth of mobile and IoT devices has led to a shortage of spectrum resources.  Cognitive Radio Networks (CRN) offer secondary users (SUs) the ability to opportunistically access unoccupied licensed channels, presenting a prospective solution for spectrum management. While spectrum management serves as a critical wireless resource allocation tool, it faces several security threats, including DoS attacks, due to their broadcast nature, database-driven architecture, and the potential malicious behavior of SUs~\cite{chakraborty2022bankrupting, grissa2015lpos}. Adversaries may target system availability through DoS aiming to exhaust server resources that handle servicing requests. Numerous research endeavors exist to mitigate DoS attacks through intrusion detection systems (IDSs) and mechanisms encompassing network-based solutions, cryptographic techniques, and game theory-based approaches. 
Recent progress in machine learning has propelled AI-based IDSs into the spotlight, demonstrating the ability to accurately identify abnormal behavior, with success rates surpassing $95\%$ in some cases \cite{chakraborty2023capow}. However, despite their merits, these methods may require knowledge and access to broad (some cases private) network topologies, user-sensitive network traffic, and continuous training on large-scale data \cite{darzi2024counter, masur2022artificial}. Moreover, they may be vulnerable to some AI-based loopholes exploited by attackers with substantial costs to the underlying system \cite{doriguzzi2024flad, mittal2023deep}. Therefore, it is ideal that they are complemented with counter-DoS techniques that do not rely on such features and can offer additional provable security guarantees.  


Client Puzzle Protocols (CPP) permit a client to access server resources only upon presenting a valid token generated by solving a puzzle like Proof of Work (PoW) \cite{bostanov2021client}. CPPs significantly increase the cost of the DoS attack (e.g., computational, memory) depending on the type of puzzles (e.g., timing, AI-based), thereby substantially mitigating their impact. CPPs can offer an ideal complement to AI-based counter-DoS, but they must achieve various properties such as cost asymmetry, efficiency, statelessness, memorylessness, unforgeability, and non-parallelization~\cite {ali2020foundations}.  Given their requisite features and the need for scalability for IoT networks, alleviating the burden of puzzle management from the users and servers is crucial. One such effort is outsourcing puzzle generation and distribution to a trusted entity called ``Bastions" \cite{kiruthika2011new}. However, these approaches often presume the existence of Bastions in applications, presenting just abstract concepts without clarity on which entity realistically assumes the Bastion role, thus missing proper architectural incentives. To benefit from outsourced CCPs, bastions' trust level and architectural duty must be well-justified and integrated into the target application. 

{\em Privacy and Anonymity in SAS under DoS Attacks}: The Federal Communications Commission (FCC) has instructed the utilization of centralized SAS, comprising multiple geolocation spectrum databases to foster dynamic spectrum resource access~\cite{grissa2016efficient}. This facilitates spectrum sharing between governmental entities and commercial operators, with primary and secondary users. FCC mandates that users provide sensitive information, including precise location coordinates (longitude and latitude), desired spectrum channel, usage data, and transmission details, to access spectrum availability \cite{jasim2021cognitive}. This not only gives rise to privacy concerns regarding users' confidential data, and identity but also facilitates the tracing and potential exposure of location privacy (e.g., revealing behavioral patterns, lifestyle choices, etc.) \cite{agarwal2022survey}.  Moreover, the absence of authentication during private spectrum data access, coupled with the reliance on many counter-DoS solutions for authentication, underscores the critical need to prioritize user anonymity and privacy. This gap in spectrum management services calls for solutions that address anonymity, location privacy, and DoS mitigation simultaneously.

{\em Counter-DoS and Privacy for Wireless Networks in the Post-Quantum Era}:  The emergence of quantum computers presents a substantial security risk to NextGen networks, potentially compromising foundational security protocols (e.g., TLS) and undermining critical aspects of SAS such as DoS protection and privacy safeguards (e.g., \cite{grissa2021anonymous}). Furthermore, conventional cryptographic methods used in privacy-preserving techniques, anonymity networks, and counter-DoS solutions rely on cryptographic problems vulnerable to quantum computers. Hence, Post-Quantum Cryptography (PQC) becomes imperative to furnish a robust long-term security solution \cite{darzi2023envisioning}. 

\vspace{-2mm}
\subsection{Our Contribution} \label{subsec:contribution}
{\em We designed a novel framework that culminates various cryptographic techniques to address the complex array of privacy and security challenges stemming from SAS under DoS and quantum computer attacks. Our scheme presents a "Privacy and Anonymity preserving Counter-DoS in the post-Quantum era" (\pacdosq) for spectrum management in next-generation networks.} We summarize some of the desirable properties of \pacdosq~as follows: \\ \vspace{-3mm}

$\bullet$~\noindent{\em \uline{Enabling Outsourced Counter-DoS Services with SAS Architecture Compliance}:} We devised innovative counter-DoS services formed on CPP architecture featuring hash-based puzzles, where puzzle generation and distribution are delegated to database-driven entities, termed "Private Spectrum Bastions" (PSBs) in our work. Integrating Bastion services within SAS geo-location databases offers several advantages, as PSBs can supply quantum-safe puzzles alongside spectrum availability, maintaining architectural feasibility, and enhanced efficiency. Our PSB approach also paves the way for tackling the location privacy problems of spectrum management and outsourced puzzle services with enhanced robustness as described below. \\ \vspace{-2mm}

\vspace{-1mm}
$\bullet$~\noindent{\em \underline{Fault-Tolerant Location Privacy and Anonymity}:} The database-driven SAS architecture, by FFC requirements, brings various privacy issues as discussed in Section \ref{subsec:literature}~\cite{grissa2021anonymous}. Therefore, despite our opportunistic integration of outsourced CCPs with existing SAS architectures, which mitigates DoS attacks, it still requires clients to obtain puzzles and spectrum data from PSBs. We address the privacy concerns as follows: {\em (i)} We harness distributed Private Information Retrieval (PIR) protocols~\cite{goldberg2007improving} that synergize with multi-server PSB architecture \cite{grissa2021anonymous}. The clients fetch spectrum information (by adhering to FFC regulations) and CCPs privately. Moreover, our choice of PIR protocol permits resiliency against network failures and some subsets of non-responding PSB servers.  {\em (ii)} We ensure clients connect to PSBs and perform private retrieval operations through a post-quantum secure version of the Tor network~\cite{tujner2019quantum}, thereby offering anonymous access.  \\ \vspace{-2mm}




\vspace{-2mm}
$\bullet$ \noindent{\em \underline{Post-Quantum Security}:} \pacdosq~offers all the above desirable security and privacy features with a post-quantum guarantee thanks to the reliance on NIST-PQC standards in Tor and information-theoretically secure PIR operations. 





\section{Preliminaries and Building Blocks} \label{sec:Prelim} \vspace{-1mm}

In this section, we outline the notations, cryptographic primitives, and tools employed in our proposed framework.


\textbf{Notations:}  $|x|$ and $\{0,1\}^k$ signify the bit length of a variable and $k$-bit binary value, respectively. $\mathbb{F}$, $GF(2)$, and $\mathbb{Z}$, denote a finite field, Galois Field with modulo $2$, and a set of integers, respectively. $\{x_i\}_{i=1}^{\ell}$ and $\xleftarrow{\$}\mathcal{S}$ denote $(x_1, x_2, ..., x_\ell)$ and random selection from the set $\mathcal{S}$, respectively. The function $h(.)$ denotes a cryptographically secure hash function. $sk$ and $pk$ are secret and public keys, respectively.

\textbf{Private Information Retrieval:} The PIR construction enables a client to retrieve a block of information from a database without revealing the privacy of the retrieved item to the database server(s). We will focus on multi-server PIR since our system model includes multiple spectrum databases. We opt for the fault-tolerant IT-PIR \cite{goldberg2007improving} that offers $\nu$-byzantine robustness, ensuring the reconstruction of the target block even if $\nu$ servers provide incorrect responses.

\textbf{PQ-Secure Primitives:} We use NIST PQC standardized lattice-based schemes for KEM and signature,  $\textit{Kyber}$ \cite{bos2018crystals} and $\textit{Dilithium}$ \cite{ducas2018crystals}, respectively.  
The $\textit{Kyber}$ KEM is formed on the Module-LWE problem and is comprised of three algorithms ($\kyberkeygen, \kyberencap, \kyberdecap$). 
The $\textit{Dilithium}$ signature is also formed on Module-LWE and is comprised of three algorithms: $(sk, pk)\as$ $\dilithiumkeygen(1^\lambda)$; $\sigma\as$ $\dilithiumsign(sk, m)$; and $\{0,1\} \as$ $\dilithiumverify(pk, m, \sigma)$.

\textbf{Hash-based Puzzles:} We use hash-puzzles \cite{aura2000resistant} that are comprised of three functions ($\texttt{Gen}{\xspace}, \texttt{PoW}{\xspace}, \texttt{Verify}{\xspace}$): 
\begin{itemize}[leftmargin=*] 
	\item[-] $\Pi \leftarrow Puzzle.Gen(1^\lambda, \kappa)$: Given the security parameter $\lambda$ and the difficulty level $\kappa$, it selects a random nonce $N \as \{0,1\}^\kappa$ and produces hash-based puzzles $\Pi = (N, \kappa)$.
	\item[-] $\Psi \as\pow(\Pi, \kappa)$: Given a puzzle $\Pi$, it brute forces a nonce $\Psi = N_x$ to obtain a hash value with $\kappa$-bit leading zeros, $0_10_20_3...0_\kappa Y \as h(\Pi, N_x)$, where $Y \in \{0,1\}^{|h|-\kappa}$.
	\item[-] $\{0,1\} \as \powverif(\Pi, \Psi)$: The verifier checks if the first $\kappa$ bits of the hash value of $h(\Pi, N_x)$ are zero.
\end{itemize}

\section{The Proposed System \& Framework: $\pacdosq$}\vspace{-1mm}
\label{sec:systemoverview}

This section delineates our system setup and framework.

    
\subsection{PACDoSQ Architecture and Initial Setup}\label{subsec:initialsetup}
Our system model has three main entities: 1) {\em Private Spectrum Bastions (PSBs):} consist of multiple geo-location spectrum databases \cite{agarwal2022survey, grissa2021anonymous} that provide spectrum availability information. They maintain synchronicity and consistency under FCC guidelines. 2) {\em Clients:} are secondary users equipped with mobile devices (e.g., laptops). They connect to the servers for network services by obtaining spectrum availability from PSBs. 3) {\em Servers:} are various network servicing platforms (e.g.,  web/cloud servers), to which clients seek to connect. 





We outline the initial setup of PSB and \textit{PQ-Tor} below.  \vspace{1mm}


\textbf{Database (DB) Structure \& Setup:} The PSBs synchronize their DB by incorporating various parameters like location coordinates ($l_x, l_y$), frequency channel number ($ch$), and spectrum data. DB is conceptualized (and simplified) as a matrix with dimensions $r\times s$, where each row represents one data block comprising $b$ bits. Each block consists of $s$ words, each with a size of $w$ bits formatted as $GF(2^w)$ (as in \cite{chakraborty2022bankrupting, chakraborty2023capow}). PSBs maintain other relevant information stipulated by the FCC (as in \cite{agarwal2022survey}) like row index of coordinates with proper subroutines, for brevity herein referred as  $\textit{DB-Index(.)}$.

\textbf{\textit{PQ-Tor} Configuration:} Our $\PQTor$ variant has the following alterations over conventional Tor: (i) $\textit{RSA}$ signature is replaced with $\textit{Dilithium}$ signing in the consensus part. (ii) $\textit{RSA}$ KEM is substituted with $\textit{Kyber}$ KEM in circuit creation. (iii) $\textit{AES-128}$ is replaced with $\textit{AES-256}$ to double symmetric key size against Grover's algorithm \cite{glas2012signal}.

\subsection{\pacdosq~Framework} \label{subsec:mainoperations}

We illustrate the flow of \pacdosq~framework in Fig. \ref{fig:Scheme}, provide its algorithmic description in Algorithm \ref{Alg:PQInstantiation}, and further elaborate on its steps as follows: 



\begin{figure*}
  \includegraphics[scale=0.51]{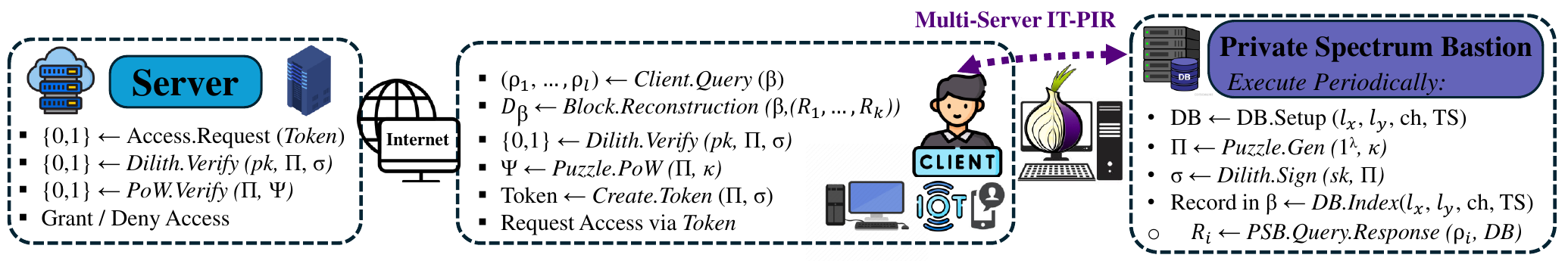}\vspace{-2mm}
  \caption{A high-level representation of the proposed architecture and workflow.}
  \label{fig:Scheme}\vspace{-5mm}
\end{figure*}

\begin{algorithm}[ht!]
	\small
	\caption{$\pacdosq$ Scheme}\label{Alg:PQInstantiation}
	\hspace{5pt}
	\begin{algorithmic}[1]
		\Statex \vspace{-1mm}$\textbf{Private Spectrum Bastions}$:        
            \State $\textit{DB} \leftarrow \dbsetup(l_x, l_y, ch, TS)$
            \For{$(l_x, l_y) \in \textit{grid} $}
            \For{$TS \in \textit{Timeframe}$}
            \For{$\theta \in \{1, 2, ..., \textsf{max}\}$}
		\State $\Pi_\theta \as \puzgen(1^\lambda, \kappa)$ 
            \State $\sigma_\Pi \as \dilithiumsign(sk, \Pi_\theta)$
            \State Given $\beta \leftarrow \textit{DB-Index}(l_x, l_y, ch, TS)$.
		\State Record $\Pi_\theta$ and $\sigma_{\Pi_\theta}$ in \textit{DB} within index $\beta$.
            \EndFor
            \EndFor
            \EndFor
		\Statex $\underline{R_i \as \PSB.\query.\response(\rho_i, \textit{DB})}$: \vspace{+1mm}
            \State Upon receiving request $\rho_i$, $\PSB_i$ computes: $R_i \leftarrow \rho_i \cdot \textit{DB}$
            \State \Return $R_i$, and respond to the client via $\PQTor$
	\end{algorithmic}
	\algrule
	\begin{algorithmic}[1]
		\Statex $\textbf{Clients}$: 
		\Statex $\underline{(\rho_1, \rho_2, ..., \rho_\ell)\as \client.\query(\beta)}$:\vspace{+2mm}
            \State Obtain index $\beta \leftarrow \textit{DB-Index}(l_x, l_y, ch, TS)$
		\State Set $e_{\beta} \leftarrow \overrightarrow{1_{\beta_r}} \in Z_r$
		\State Choose $\ell$ distinct $\alpha_1, \alpha_2, ..., \alpha_\ell \in \mathbb{F}^*$
		\State Choose $r$ random degree-$t$ polynomials $f_1, f_2, ..., f_r \xleftarrow{\$} \mathbb{F}[x]$ s.t. $f_j(0) = e_\beta[j]$, for all $j \in [1, 2, ..., \ell]$
		\State $\rho_i \as \langle f_1(\alpha_j), f_2(\alpha_j), ..., f_r(\alpha_j)\rangle$ for all $j \in \{1, 2, ..., \ell\}$
		\State Query PSB$_i$ via transmitting $\rho_i$ over \textit{PQ-Tor} for $i = 1, 2, ...,\ell$\vspace{+1mm}
		\Statex $\underline{D_{\beta c}\as \blockrec(\beta, (R_1, R_2, ..., R_k))}$:\vspace{+1mm}
		\State Receiving PIR responses $R_i$ from PSBs for $i = 1, 2, ...,k$
		\If{$k>t$}
		\EndIf
		\For{$c$ from $1$ to $s$}
		\State $R_{ic} \leftarrow R_i[c]$ for all $i \in [1, 2, ..., k]$
		\State $S_c \leftarrow \langle R_{1c}, R_{2c}, ..., R_{kc} \rangle$
		\State $D_{\beta c} \leftarrow \textit{EASYRECOVER}(t, \omega, [\alpha_1, \alpha_2, ..., \alpha_k], S_c)$
		\If{Recovery fails and $\nu < k - \lfloor{\sqrt{kt}\rfloor}$} 
		\State $S_c \leftarrow \langle R_{1c}, R_{2c}, ..., R_{kc} \rangle$
		\State $\Return~D_{\beta c} \leftarrow \textit{HARDRECOVER}(t, \omega, [\alpha_1, \alpha_2, ..., \alpha_k], S_c)$
		\EndIf
		\EndFor
		\Statex $\underline{\textit{Token}\as \createtoken(\Pi, \sigma_{\Pi})}$: Given $\Pi$ in $D_{\beta}$, \textbf{do} 
            \State $\{0,1\} \as \dilithiumverify(pk,\Pi, \sigma_\Pi)$
            \State $\Psi \as \pow(\Pi, \kappa)$
            \State $\Return~\textit{Token} \leftarrow (\Pi, \sigma_\Pi, \Psi)$
    	\end{algorithmic}
	\algrule
	\begin{algorithmic}[1]
		\Statex $\textbf{Servers}$: For requests, server does:
            \State $\{0,1\} \as \dilithiumverify(pk, \sigma_{\Pi})$
            \State $\{0,1\} \as \powverif(\Pi, \Psi)$
            \State \textbf{if} above holds, $\Return~1$, grant access, and record the $\textit{Token}$
            \State \textbf{otherwise}, $\Return~0$ and deny access.
	\end{algorithmic}
\end{algorithm} 
\setlength{\textfloatsep}{0pt}

\subsubsection{\textbf{PSBs - Puzzle Management and Private Spectrum Service}} {\em (i)} PSBs setup DB by generating spectrum management context (e.g., coordinates, channels), puzzles, and their PQ signatures as in Step 1-8. Within defined segments of the grid marked by specific coordinates for multiple time frames, they generate hash-based puzzles and sign them according to predetermined indices derived from $((l_x, l_y), ch, TS)$. The puzzles and their $\textit{Dilithium}$ signatures are updated periodically according to the puzzle difficulty/validity interval (e.g., every hour). The quantity of puzzles generated depends on factors such as the number of servers and their maximum capacity ($\textsf{max}$). {\em (ii)} PSBs handle the spectrum query first via the fault-tolerant multi-server PIR~\cite{goldberg2007improving} that permits an information-theoretically private retrieval of coordinate availability, puzzle, and their signatures (Step 9). The PIR response is sent to the client via \PQTor~to ensure anonymity (step 10).

\subsubsection{\textbf{Client's Private Availability Information and Quantum-Safe Puzzle Retrieval}} To comply with FCC regulations and participate in the counter-DoS mechanism for accessing a networking services server, the clients retrieve puzzles and spectrum information from the PSBs. Clients use their coordinates, frequency channel, and timestamp to determine the target index $\beta$ within the PSB's DB (Step 1). Subsequently, the client constructs a PIR request by selecting a basis vector $\overrightarrow{1_{\beta_r}}$, where all elements are zero except for index $\beta$, which is set to one (Step 2). Furthermore, considering $\ell$ PSBs and utilizing Shamir's secret sharing technique, the client selects $\ell$ random elements from $\mathbb{F}^*$ (Step 3), generates $r$ random polynomials with a degree of $t$ satisfying $f_j(0) = e_\beta[j]$ (Step 4), and creates $\ell$ PIR requests $\rho$ (Step 5). Finally, the client dispatches the PIR requests to each PSB's $DB_i$ via $\PQTor$ (Step 6).

Steps 7-15 involve the client's query recovery phase. Assuming that $k$ out of $\ell$ PSB servers respond to the client, the client can reconstruct the block using the $\textit{EASYRECOVER}$ subroutine as described in \cite{goldberg2007improving}, which relies on the Lagrange interpolation technique. If a sync/transmission error occurs or an incorrect block is returned by $\nu<k$ servers (e.g., Byzantine (compromised) server), the client can use $\textit{HARDRECOVER}$ algorithm~\cite{goldberg2007improving} based on error-correction codes to handle the error.   By reconstructing the block item with one of these recovery algorithms, the client retrieves the puzzle, whose validity can be confirmed by verifying PSB's signature.

\subsubsection{\textbf{PoW \& Token Creation}}
The online phase of the framework begins at this stage, where the client performs the PoW and generates the $\textit{Token}$. 
Given the hash-based puzzle ($\Pi$) and the target network service ID ($ID_S$), the client must conduct a brute-force search through a nonce ($N_C$) to discover a hash value $h(ID_S, TS, N_B, N_C)$ with $\kappa$-bit leading zeroes (Step 17-18). Then, upon identifying a solution, it generates the $\textit{Token}$, which comprises the PSBs' and client's nonces along with the $TS$ and $ID_S$, and transmits it to the server.

\subsubsection{\textbf{Access Requests}} 
The client submits a request to the server with $\textit{Token}$ for a given time interval. The server first verifies the puzzle's validity by checking the PSB's signature (Step 1), followed by efficiently verifying the $\textit{Token}$ using a hash operation (Step 2). Only if the puzzle solution is valid and authentic, the access is granted. 

\section{Security Analysis}
\label{sec:Secuirty}

\textbf{Threat Model and Security Objectives:}  
Our threat model captures a vast range of attacks at the intersection of counter-DoS, privacy, anonymity, and basic security services, all under quantum computing threat: {\em (i)} Clients may launch DoS attacks on the servers. To mitigate such attacks, we consider the counter DoS threat model in outsourced puzzle settings, wherein PSBs carry over the Bastion role for puzzle management. {\em (ii)} Client's location privacy and identity information are under threat due to the FCC's requirement of sharing coordinate and device specs with spectrum management databases. In our model, PSBs carry out this duty along with puzzle management. Hence, we consider that PSBs are curious about the location and identity information of the clients. {\em (iii)} Some (but small set of) PSBs might be compromised and therefore may act as Byzantine servers (do not respond or provide incorrect input). {\em (iv)} The attacker is quantum computing capable and can use it to launch attacks considered in {\em (v)} as well as to threaten basic security services such as confidentiality, authentication, and integrity (which are usually achieved through essential services like TLS). Given the threat model, $\pacdosq$ aims to achieve the following security objectives: 

\begin{itemize}[leftmargin=*]
   \item \underline{\textit{Client Privacy and Anonymity:}} Clients' location privacy (i.e., coordinates), device specs, and identity remain confidential and anonymous during spectrum availability and puzzle retrieval from the PSBs and external attackers. 
    \item \underline{\textit{Resilience to Partial Failure and Byzantine Behavior:}} The client can retrieve and reconstruct the intended block item (including spectrum and puzzle data) even if some subset of the PSBs act non-responsive or malicious. 
    \item \underline{\textit{DoS Mitigation:}} A measurable and provable counter DoS measures are employed. 
    \item \underline{\textit{PQ-security:}} All the above objectives are achieved in the presence of quantum computing capable adversaries. 
\end{itemize}

\vspace{1mm}
\textbf{Security Analysis:} We give a series of security proofs capturing the threat model as follows:


\vspace{-1mm}
\begin{mylemma}
\label{lem:privacy}
$\pacdosq$ ensures $\textit{t-private k-out-of-}\ell$ information-theoretically secure location privacy and computationally secure anonymity via onion routing. 
\end{mylemma}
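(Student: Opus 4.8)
The plan is to decompose the lemma into its two orthogonal claims---information-theoretic location privacy and computational anonymity---and prove each separately, since they follow from two disjoint building blocks of $\pacdosq$. For the privacy claim, I would argue that each individual PIR request $\rho_i$ seen by $\PSB_i$ is the evaluation at a point $\alpha_i$ of a collection of degree-$t$ polynomials whose constant terms encode the basis vector $e_\beta$. Because the polynomials $f_1,\dots,f_r$ are chosen uniformly at random subject only to $f_j(0)=e_\beta[j]$, the standard Shamir secret-sharing argument shows that any $t$ (or fewer) of the shares $\rho_{i_1},\dots,\rho_{i_t}$ are distributed independently of $\beta$: for every candidate index $\beta'$ there is a bijection between polynomial tuples consistent with $\beta$ and those consistent with $\beta'$ that fixes the observed evaluations. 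Hence the view of any coalition of up to $t$ colluding PSBs is statistically independent of $(l_x,l_y,ch,TS)$, which is exactly $t$-privacy; combined with the $\textit{EASYRECOVER}/\textit{HARDRECOVER}$ correctness from \cite{goldberg2007improving}, which guarantees reconstruction from any $k>t$ honest-or-Byzantine responses with $\nu<k-\lfloor\sqrt{kt}\rfloor$, this yields the ``$t$-private $k$-out-of-$\ell$'' guarantee.

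For the anonymity claim, I would reduce unlinkability of a client to the security of the $\PQTor$ circuit. The argument is a hybrid/simulation one: the client's traffic to each $\PSB_i$ travels through an onion-routed circuit whose layers are encrypted under $\textit{AES-256}$ keys established via $\kyber$ encapsulation, with relay selection authenticated through $\dilithiumverify$ against a $\dilithium$-signed consensus. An adversary that distinguishes which client issued a given query can be turned into a distinguisher against either the IND-CCA security of $\kyber$ (to recover a layer key), the EUF-CMA security of $\dilithium$ (to forge a malicious consensus or relay descriptor), or the IND-CPA/PRF security of $\textsf{AES-256}$; since all three are post-quantum assumptions, anonymity holds computationally against quantum adversaries. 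I would phrase this as: the real execution is computationally indistinguishable from one in which the correspondence between clients and circuits is replaced by a random permutation, so no PSB or external observer controlling a bounded fraction of relays learns the client's identity beyond negligible advantage.

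The main obstacle I anticipate is making the anonymity half rigorous without re-deriving Tor's security from scratch: conventional onion routing has no clean stand-alone game-based anonymity theorem, so I would need to either cite the analysis of \cite{tujner2019quantum} for the PQ-Tor variant as a black box and argue only that our three substitutions (RSA$\to\dilithium$, RSA-KEM$\to\kyber$, AES-128$\to$AES-256) preserve the relevant security reductions under quantum adversaries, or restrict the claim to a weaker but provable notion (e.g., sender-unlinkability against a passive adversary not controlling the guard node). A secondary subtlety is ensuring the $t$-privacy argument survives the Byzantine case: a malicious PSB contributes to both the ``up to $t$ colluding observers'' bound and the ``$\nu$ incorrect responses'' bound, so I must check these quantifiers compose---i.e., the coalition bound $t$ and the Byzantine bound $\nu$ with $k>t$ and $\nu<k-\lfloor\sqrt{kt}\rfloor$ are simultaneously satisfiable in our parameter regime, which follows directly from the Guruswami--Sudan list-decoding radius underlying $\textit{HARDRECOVER}$ in \cite{goldberg2007improving}. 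The remaining steps---correctness of index derivation via $\textit{DB-Index}$ and validity of the retrieved puzzle via $\dilithiumverify$---are routine and I would dispatch them in a sentence each.
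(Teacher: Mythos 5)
Your decomposition into information-theoretic $t$-privacy via the standard Shamir secret-sharing argument and computational anonymity via the layered \PQTor{} encryption (\kyber/\dilith/AES-256) is exactly the route the paper takes, whose own proof is essentially a two-sentence sketch of the same two ingredients. Your version is in fact more detailed than the published one --- the bijection argument for $t$-privacy, the explicit reduction targets for anonymity, and the caveat that onion routing lacks a clean stand-alone anonymity theorem all go beyond what the paper writes down --- so there is no gap to report.
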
 
\begin{proof}\vspace{-1.5mm}
By utilizing $(\ell,t)$-Shamir secret sharing, with the assumption of $k$ honest responses from $\ell$ PSBs where $k>t$, the target index $\beta$, along with the client's private information, including location and transmission details, remains confidential during the block retrieval process, even in the event of collusion among $t$ PSBs with ${0\leq t\leq\ell-1}$. The deployment of onion routing with a minimum of three intermediate nodes, each possessing knowledge solely of its predecessor and successor, alongside communication through a circuit with layers of symmetric encryption using $\textit{AES-256}$ keys derived via a $\textit{Module-LWE}$-based KEM scheme, ensures the anonymity and untraceability of the client's identity and activities against both PSBs and eavesdropping adversaries.
\end{proof}

\vspace{-1mm}
\begin{mycorollary}\label{cor:Robustness}
$\pacdosq$ attains $\nu$-Byzantine-Robustness with $\nu < k - \lfloor \sqrt{kt}\rfloor$. 
\end{mycorollary}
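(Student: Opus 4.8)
The plan is to reduce Corollary~\ref{cor:Robustness} to the known robustness guarantee of the fault-tolerant IT-PIR of Goldberg~\cite{goldberg2007improving}, since \pacdosq~uses that scheme verbatim for the private retrieval of spectrum data and puzzles. First I would recall the setup fixed in Algorithm~\ref{Alg:PQInstantiation}: each client query is an $(\ell,t)$-Shamir sharing of the basis vector $e_\beta$, so the $i$-th PSB returns $R_i = \rho_i\cdot\textit{DB}$, and column-wise each $S_c = \langle R_{1c},\dots,R_{kc}\rangle$ is a vector of evaluations, at the $k$ distinct points $\alpha_1,\dots,\alpha_k$, of a polynomial of degree at most $t$ whose constant term is the desired word $D_{\beta c}$. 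An honest PSB contributes a correct evaluation; a Byzantine PSB may contribute an arbitrary field element. Hence recovering $D_{\beta c}$ is exactly the problem of reconstructing a degree-$t$ polynomial from $k$ evaluations of which at most $\nu$ are corrupted — a Reed--Solomon decoding problem with $k$ symbols, dimension $t+1$, and $\nu$ errors.

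Next I would invoke the decoding guarantee underlying the \textit{HARDRECOVER} subroutine (Guruswami--Sudan list decoding, as used in~\cite{goldberg2007improving}): a codeword can be recovered whenever the number of agreements $k-\nu$ strictly exceeds $\sqrt{k(t)}$ (more precisely $\sqrt{k\cdot\text{(dimension)}}$ up to the usual off-by-one in the degree versus dimension bookkeeping), i.e.\ whenever $\nu < k - \lfloor\sqrt{kt}\rfloor$. I would state this as the key inequality, note that it is precisely the branch condition guarding the call to \textit{HARDRECOVER} in Step~18 of the client algorithm, and conclude that under this bound every column $c\in\{1,\dots,s\}$ is reconstructed correctly, hence the whole block $D_\beta$ — which contains the puzzle $\Pi$ and its signature $\sigma_\Pi$ — is recovered. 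I would then close by observing that the subsequent signature check $\dilithiumverify(pk,\Pi,\sigma_\Pi)$ provides an independent integrity confirmation, but that correctness of recovery already follows from the code-distance argument alone, so \pacdosq~inherits $\nu$-Byzantine-robustness with $\nu < k-\lfloor\sqrt{kt}\rfloor$.

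The main obstacle I expect is not mathematical depth but precision in the parameter bookkeeping: making sure the polynomial degree $t$, the secret-sharing threshold, the Reed--Solomon dimension, and the number of responding servers $k$ are all lined up so that the list-decoding radius comes out to exactly $k-\lfloor\sqrt{kt}\rfloor$ rather than an expression that is off by one, and stating clearly the standing assumption $k>t$ (inherited from Lemma~\ref{lem:privacy}) which is what makes the constant term uniquely determined in the first place. A secondary point worth a sentence is that robustness here is unconditional (information-theoretic), since it rests only on the distance of the Reed--Solomon code and not on any computational assumption; the \dilith~signature is used for authentication of the puzzle source, not for the robustness claim itself.
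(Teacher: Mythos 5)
Your proposal is correct and follows essentially the same route as the paper: both reduce the claim to the $\nu$-Byzantine robustness of Goldberg's fault-tolerant IT-PIR, obtained by viewing each column of responses as a Reed--Solomon-type decoding instance and invoking the Guruswami--Sudan list decoder's ability to correct $\nu < k - \lfloor\sqrt{kt}\rfloor$ errors given $k>t$ responding servers. Your write-up simply makes explicit the parameter bookkeeping and the information-theoretic nature of the guarantee that the paper's one-sentence proof leaves implicit.
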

\begin{proof}\vspace{-1.5mm}
\pacdosq~offers block reconstruction from client-received query responses (e.g., communication failures, malicious drop)  
 by employing Guruswami-Sudan list decoding algorithm capable of correcting ${\nu < k - \lfloor \sqrt{kt}\rfloor}$ errors and  $(\ell,t)$-Shamir secret sharing with $k$ responding PSBs ($k>t$). 
\end{proof}




\vspace{-1mm}
\begin{mycorollary}\label{lem:PQDoS}
\pacdosq~offers enhanced counter-DoS for the servers via client-server puzzles.
\end{mycorollary}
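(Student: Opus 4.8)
The plan is to establish that \pacdosq~inherits the standard counter-DoS guarantees of client puzzle protocols, adapted to the outsourced (Bastion) setting where PSBs generate and distribute puzzles. First I would recall the defining properties a CPP must satisfy---cost asymmetry, efficiency of verification, statelessness, memorylessness, unforgeability, and non-parallelizability---and argue each holds for the hash-based puzzle $\Pi = (N,\kappa)$ used here. Cost asymmetry follows from the fact that \pow~requires an expected $2^\kappa$ hash evaluations to find a nonce $N_x$ with $h(\Pi,N_x)$ having $\kappa$ leading zero bits (modeling $h$ as a random oracle), whereas \powverif~costs a single hash evaluation; the server's work per request is thus $O(1)$ while the client's is $\Theta(2^\kappa)$. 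Statelessness and memorylessness hold because the server stores nothing during puzzle issuance (the puzzle originates at the PSB, bound to $(l_x,l_y,ch,TS)$ via \dbindex) and verification of the returned \textit{Token} depends only on the token contents plus the PSB public key.

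Next I would address unforgeability, which in the outsourced setting has two facets. A client cannot forge a puzzle solution without performing the work: by the random-oracle property, the probability that an adversary making $q$ queries outputs a valid $\Psi$ for a fresh $\Pi$ with fewer than the expected effort is negligible, bounded by roughly $q/2^\kappa$. A client also cannot forge a puzzle itself (to, e.g., present a trivially solvable puzzle), because each $\Pi_\theta$ is accompanied by a \dilith~signature $\sigma_{\Pi_\theta}$ under the PSB secret key, and the server runs \dilithiumverify~before \powverif; forging here reduces to breaking the EUF-CMA security of \dilith, which holds against quantum adversaries under Module-LWE. Non-parallelizability I would treat more carefully: a single hash puzzle of difficulty $\kappa$ is parallelizable across $p$ machines with linear speedup, so the honest ``measurable'' counter-DoS claim should be phrased in terms of aggregate adversarial work rather than wall-clock time---an attacker controlling $m$ bot clients and wishing to mount a flooding attack of rate $R$ requests must expend $\Theta(R \cdot 2^\kappa)$ hash operations, which is the quantitative guarantee the corollary asserts. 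I would also note the PSB-side amortization: puzzles are precomputed per grid cell and time frame (Steps 2--8 of Algorithm~\ref{Alg:PQInstantiation}), so puzzle generation does not load the target server at all, and the quantity generated scales with \textsf{max}, decoupling it from request volume.

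Finally I would tie the pieces together: since puzzle retrieval itself is done privately and anonymously (Lemma~\ref{lem:privacy}) and robustly (Corollary~\ref{cor:Robustness}), the counter-DoS layer composes with the privacy layer without either degrading the other---in particular, the anonymity of puzzle acquisition does not let a client obtain ``free'' puzzles, since solving is still required, and the DoS mitigation does not leak location, since the puzzle is fetched via PIR. I expect the main obstacle to be making the ``enhanced'' and ``measurable/provable'' language of the corollary precise: one must either (a) settle for the honest asymmetric-cost statement plus a reduction of the forgery cases to the random oracle and to \dilith~unforgeability, or (b) commit to a concrete DoS game (adversary budget in hash operations vs. sustained request rate admitted by the server) and prove a bound in it. I would opt for (b) with a clean game-based statement, since that is what ``provable'' demands, while being careful that the bound degrades gracefully under parallel and quantum (Grover) speedups---Grover gives at best a quadratic improvement, so a quantum attacker still needs $\Theta(R \cdot 2^{\kappa/2})$ effort, which one should fold into the choice of $\kappa$.
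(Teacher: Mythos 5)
Your core argument matches the paper's: the published proof is a two-sentence sketch asserting exactly your two main points, namely that the PSB's signature on each puzzle rules out puzzle forgery (the server runs \dilithiumverify{} before \powverif), and that obtaining a valid \textit{Token} costs an expected $O(2^{\kappa})$ hash operations while the puzzle is only valid for a time window tied to $\kappa$. Where you diverge is in scope and rigor, mostly to your credit. The paper does not walk through the CPP property checklist, does not propose a concrete DoS game, and does not discuss composition with the PIR/Tor layers; it also defers the Grover $O(2^{\kappa/2})$ analysis to the separate PQ-security lemma (Lemma~\ref{cor:PQsecurity}), so folding it in here is a reorganization rather than a disagreement. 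The one place you are genuinely more careful than the paper is non-parallelizability: the preliminaries list it as a required CPP property, but single hash puzzles admit linear parallel speedup, and the paper's proof is silent on this; your reframing of the guarantee as aggregate adversarial work $\Theta(R\cdot 2^{\kappa})$ rather than wall-clock delay is the honest way to state what is actually proved. In short, your proposal is a strict superset of the paper's argument with the same skeleton; option (b) in your closing paragraph (a concrete budget-vs-rate game) would make the ``measurable and provable'' claim precise in a way the published proof does not attempt.
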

\begin{proof}\vspace{-1.5mm}
The server only accepts puzzle solutions with PSB's signature, eliminating the possibility of puzzle forgery. Since PoW requires $O(2^{n})$ trial (for classical settings), the adversary needs an average of $O(2^{\kappa})$ hash operations to acquire a valid $\textit{token}$ for server, where a puzzle is only valid for a designed amount of time depending on the difficult level $\kappa$.  
\end{proof}



\vspace{-2mm}
\begin{mylemma}\label{cor:PQsecurity}
$\pacdosq$ achieves the objectives in Lemma 1 and Corollary 1-2 with PQ-security. 
\end{mylemma}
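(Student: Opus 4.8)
The plan is to show that each building block used in \pacdosq~has been instantiated with a post-quantum primitive, so the composition inherits PQ-security. First I would revisit the three claims in question and isolate exactly which cryptographic assumption each one rests on: Lemma~\ref{lem:privacy} rests on (a) the information-theoretic privacy of $(\ell,t)$-Shamir secret sharing, which holds unconditionally and is therefore unaffected by quantum adversaries, and (b) the anonymity of onion routing, whose per-hop confidentiality reduces to the KEM used for key establishment and the symmetric cipher used for the circuit layers. Corollary~\ref{cor:Robustness} rests purely on the algebraic Guruswami--Sudan list-decoding bound together with Shamir reconstruction, again information-theoretic and thus quantum-immune. Corollary~\ref{lem:PQDoS} rests on (a) unforgeability of the signature on puzzles and (b) the generic hardness of inverting the hash function underlying the PoW.

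Next I would argue PQ-security component by component. For the anonymity part of Lemma~\ref{lem:privacy}: in \PQTor~the RSA-based KEM is replaced by \kyber, whose IND-CCA security reduces to Module-LWE, believed hard for quantum computers; the consensus signatures use \dilith, whose EUF-CMA security also reduces to Module-LWE; and \aes-128 is replaced by \aes-256 so that Grover's algorithm only reduces the effective key strength to $128$ bits, which remains infeasible. Hence every layer of the onion circuit remains confidential and authenticated against a quantum adversary, so client identity and activity stay hidden. For Corollary~\ref{lem:PQDoS}: puzzle unforgeability now follows from \dilith's PQ-EUF-CMA security instead of a classical signature, and the PoW search, which needs $O(2^{\kappa})$ hash evaluations classically, needs $\Omega(2^{\kappa/2})$ under Grover's quadratic speedup, so choosing $\kappa$ appropriately (e.g.\ doubling it) preserves the intended cost asymmetry and validity window. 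For the secret-sharing and list-decoding statements, I would simply note that no computational assumption is invoked, so they are unconditionally quantum-safe; only the transport of the PIR queries/responses needs protection, and that is already handled by \PQTor~as above.

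Finally I would assemble these observations into a composition argument: \pacdosq~is the sequential/parallel composition of (i) IT-secure PIR, (ii) \PQTor~as an anonymous channel, and (iii) the hash-puzzle/signature counter-DoS layer, and since each ingredient is individually secure against a probabilistic polynomial-time quantum adversary (or unconditionally secure), a standard hybrid argument over the polynomially many sessions shows the whole framework meets the objectives of Lemma~\ref{lem:privacy} and Corollaries~\ref{cor:Robustness}--\ref{lem:PQDoS} in the quantum setting, completing the proof.

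I expect the main obstacle to be making the composition step rigorous rather than hand-waved: one must be careful that the information-theoretic guarantees of the PIR layer are not degraded when the queries are tunneled through a merely computationally-secure \PQTor~channel, i.e.\ that a quantum adversary who breaks \PQTor~cannot thereby learn the secret-shared index, and conversely that the Byzantine-robustness bound still applies when ``non-responding'' can be induced by an adversary controlling the network. Addressing this cleanly requires stating the adversary model for the composed system precisely (which servers/links are corrupted, and that the quantum adversary is still PPT-bounded), and then invoking a hybrid that swaps out \PQTor~for an ideal anonymous channel, incurring only a negligible loss bounded by the Module-LWE advantage and the $2^{-128}$ symmetric-security term.
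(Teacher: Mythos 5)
Your proposal matches the paper's proof in both structure and substance: the same decomposition into (a) information-theoretically secure components (Shamir-based PIR privacy and Guruswami--Sudan robustness) that are unconditionally quantum-immune, (b) Module-LWE-based \kyber/\dilith and \aes-256 under Grover for the anonymity and authentication layers, and (c) the $O(2^{\kappa/2})$ Grover bound on the hash-based PoW with adjusted difficulty/validity. The composition concern you raise at the end (whether tunneling IT-secure PIR queries through a merely computationally secure \PQTor~channel degrades the guarantee) is a legitimate refinement that the paper's proof does not address either, so you have if anything gone slightly beyond it.
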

\begin{proof}\vspace{-1.5mm}
{\em (i)} The location privacy guarantees and robustness features in Lemma \ref{lem:privacy} and Corollary \ref{cor:Robustness}, respectively, are information-theoretically secure, and therefore remain unaffected by the adversary's computational power, including quantum computers \cite{goldberg2007improving}. {\em (ii)} The onion routing anonymity in Lemma \ref{lem:privacy} relies on $128$-bit PQ security of the $\textit{AES-256}$ \cite{bonnetain2019quantum} given Grover's algorithm and the hardness of the $\textit{Module-LWE}$ problems, which can be closely reduced from the worst-case $\textit{Module-SIVP}$ problem in the random oracle model \cite{bos2018crystals}. {\em (iii)} The end-to-end security and PQ-TLS security of PQ-Tor and authentication of puzzles also achieve the same level of PQ-security via NIST PQC framework\cite{darzi2023envisioning}. {\em (iv)} The hash-based puzzle in Corollary 2 offers $O(2^{\kappa/2})$ level of PQ security due to Grover's probabilistic algorithm, and by adjusting the time validity of the PoW accordingly, the hash-based puzzles offer robust PQ counter-DoS mitigation for \pacdosq.   \end{proof}


\section{Performance Evaluation} \label{sec:PerfEval}\vspace{-1mm}
We present our evaluation metrics and experimental results. \vspace{-3mm}

\vspace{-2mm}
\subsection{Metrics, Selection Rationale, and Configurations} \label{subsec:Configuration}
\noindent \textbf{Evaluation Metrics and Rationale:} We consider the computational, communication, and storage overhead of   \pacdosq~for multi-server PIR, puzzle generation, PoW, token verification, and overhead of PQC, including $\PQTor$ components.  We also investigate scalability aspects such as end-to-end delay perceived by the client for an increased number of users, networking conditions, and PSB configurations. The configuration of PSB servers specifies the privacy levels achieved during block retrieval; for instance, $(3,2)$ indicates that privacy is maintained if any 2 out of 3 PSBs collude. To the best of our knowledge, \pacdosq~is the first to offer location privacy, anonymity, and resiliency for puzzle-based counter DoS with PQ-security. Therefore, a vis-a-vis performance comparison with counterparts is not feasible. Instead, we focus on providing a detailed performance evaluation for given metrics to assess the potential feasibility of our framework. We detail our performance evaluation as follows.\\ \vspace{-2.5mm}




\noindent\textbf{Hardware, Software Libraries, and Parameters:}  We used a desktop equipped with an $11^{th}$ Gen Intel Core $\textit{i9-11900K} @ 3.50~GHz$, $64.0~GiB$ RAM, a $1 TB$ SSD, and Ubuntu $22.04.4~LTS$. We employed varying number of Virtual Machines (VMs) with Ubuntu to simulate multiple PSB/PIR/PQ-ToR interactions. We used \textit{percy++} library\footnote{\url{https://percy.sourceforge.net/}} for the multi-server PIR, the \textit{Open Quantum-Safe} library\footnote{\url{https://openquantumsafe.org/}} for PQC primitives, and the \textit{OpenSSL}\footnote{\url{https://www.openssl.org/}} for hash. The PSB used  \textit{SQLite}\footnote{\url{https://www.sqlite.org/}} and \textit{Python3}.  We used $\textit{AES-256}$, $\textit{Kyber}$ for the KEM part, and $\textit{Dilithium}$ for the signature part of $\PQTor$. The hash-based puzzles are formed on $\textit{SHA-256}$. We rely on NIST-PQC level I security for \textit{Kyber} \cite{bos2018crystals} and \textit{Dilithium} \cite{ducas2018crystals}. \\ \vspace{-2.5mm}

\noindent\textbf{Data and Format Selection:} The database structure is a matrix with varying row sizes (e.g., $2^{10}$, $2^{12}$, $2^{14}$, $2^{17}$), where each row represents a single block of data. Utilizing publicly available raw data from the FCC\footnote{\url{https://enterpriseefiling.fcc.gov/dataentry/public/tv/lmsDatabase.html}}, we estimated that each block in the database would contain approximately $560$ bytes of information, excluding puzzles and signatures.  
Within a designated grid segment defined by coordinates $l_x$ and $l_y$, we populated databases with synthetic data representing spectrum information and signed hash-based puzzles stored in PSBs, synchronized as mandated by the FCC \cite{grissa2021anonymous}.



\vspace{-2mm}
\subsection{Experimental Results} \label{subsec:Performance} \vspace{-1mm}
\noindent\textbf{Computational Costs:} We outline our analysis in Table \ref{tab:Computational} and elaborate it as follows:
{\em (i)} The $\textit{Dilithium}$ signature with the puzzle entails key generation, signing, and verification of $29 \mu s$, $84 \mu s$, and $30 \mu s$, respectively. Puzzle generation and verification each require approximately one hash, while solving (PoW) demands brute force corresponding to the difficulty levels denoted by $\kappa$. The difficulty level is  $\kappa/2$ for quantum attacks with Grover's algorithm. 
{\em (ii)} With $t_\oplus$ representing the time for one \textit{XOR}, analytical costs are ${(n/w)\cdot t_\oplus}$ for PIR computations on the client side and ${\ell\cdot(\ell-1)\cdot r\cdot t_\oplus+3\ell\cdot (\ell+1)\cdot t_\oplus}$ on the PSB side. The empirical costs, as detailed in Table \ref{tab:Computational}, show that the expenses for PIR increase linearly with the size of the database.  
{\em (iii)}  \textit{PQ-Tor}'s costs are circuit build and applying encryption layers dominated by the three $\textit{Kyber}$ and $\textit{AES}$ operations. Notably, $\textit{Kyber}$ key generation, encapsulation, and decapsulation each take $10 \mu s$, $13.4 \mu s$, and $9 \mu s$, respectively, while $\textit{AES-256}$ only costs $7~\mu s$ for key generation and $8~\mu s$ for encryption. \\ \vspace{-2.5mm}



\vspace{-1mm}
\begin{table}[hbt!]
    \centering
    \renewcommand{\arraystretch}{1.2} 
    \setlength{\tabcolsep}{2pt}
    \begin{tabular}{|c|c|c|c|c|c|}\hline
        {\textbf{Entity}} & \textbf{Operations} &\multicolumn{4}{c|}{\textbf{Parameter}} \\\hline\hline
        \multirow{3}{*}{\textbf{PSB}} & \multirow{3}{*}{\textit{Puzzle Generation\& Sign}} & \multicolumn{4}{c|}{$|\textbf{DB}|$}\\\cline{3-6}
         &&$\boldsymbol{2^{10}}$ &$\boldsymbol{2^{12}}$ & $\boldsymbol{2^{14}}$ &$\boldsymbol{2^{17}}$\\\cline{3-6}
         & & $31$~ms & $310$~ms & $3.1$~s & $31$~s\\\cline{2-6}
         &{\textit{Query Response}} & $2.3$~ms & $5.4$~ms & $17.3$~ms & $109.9$~ms\\\hline\hline
         \multirow{6}{*}{\textbf{Client}}& \textit{Query} \& \textit{Reconstruction} & $0.9$~ms & $2.1$~ms & $5.7$~ms & $12.5$~ms\\\cline{2-6}
         & {\textit{Puzzle Signature Verify}}& \multicolumn{4}{c|}{$30~\mu s$}\\\cline{2-6}
         & \textit{PQ-Tor Computations}& \multicolumn{4}{c|}{$255.6~\mu s$}\\\cline{2-6}
         & \multirow{2}{*}{Proof of Wok}& $\boldsymbol{\kappa}:14$ & $\boldsymbol{\kappa}:18$ & $\boldsymbol{\kappa}:20$ & $\boldsymbol{\kappa}:23$\\\cline{3-6}
         && {$5.73$~ms} & {$91.7$~ms} & {$367$~ms} & {$2.93$~s}\\\hline\hline
         \multirow{2}{*}{\textbf{Server}}& {\textit{Puzzle Signature Verify}}& \multicolumn{4}{c|}{$30~\mu s$}\\\cline{2-6}
        & \textit{Token Verification} & \multicolumn{4}{c|}{$0.35~\mu s$}\\\cline{1-6}        
    \end{tabular}
    \begin{tablenotes}
       \item One client, one PSB, and a server in a (3,2) configuration setting, fixed block size of $2.93~KB$, and varying database entries ($|\textit{DB}|$). \vspace{-2mm}
     \end{tablenotes}
    \caption{ Computational Costs of $\pacdosq$}
    \label{tab:Computational}\vspace{-3mm}
\end{table}

\noindent\textbf{Communication and Storage Overhead:} We summarized our findings in Table \ref{tab:Communication} and explain them as follows: {\em (i)} Multi-server PIR is the predominant cost due to its communication overhead. The communication cost of retrieving ${\sqrt{nw}}$ bits from $\ell$ PSBs is approximately ${\ell \times \sqrt{nw}}$. The transmitted data volume increases linearly with the number of database entries. 
{\em (ii)} The storage overhead at the client side is minimal, but that of PSBs increases linearly with the number of puzzles and signatures. The hash-based puzzle $\Pi = (\kappa, N_{B})$ features a difficulty level $\kappa$ of 4 bytes and a nonce $N_{B}$ of $\kappa$ bits, with a \textit{Dilithium} signature size of $2.363$ KB. Given these specifications, each block has a fixed size of $2.93$~KB, resulting in database sizes of $4.1$ MB, $16.8$ MB, $67.3$ MB, and $538.2$ MB respectively for a grid segment.  
{\em (iii)} The communication aspect of $\PQTor$ closely mirrors conventional Tor, with negligible differences (e.g., $\textit{Kyber}$ ops). Thus, we utilized conventional Tor network metrics for communication delay estimation~\cite{TorMetrics}. Despite $\textit{Kyber}$ being faster than conventional $\textit{RSA}$ used in Tor, employing $\textit{Kyber}$ necessitates two packet transmissions due to Tor's default packet size of $512$ bytes, resulting in an average bound of $300~ms$ for circuit build time. The communication delay entails the average timing of sending PIR requests and receiving PIR responses via PQ-Tor within a built circuit. 


\begin{table}
    \centering
    \renewcommand{\arraystretch}{1.2} 
    \setlength{\tabcolsep}{2.1pt}
    \begin{tabular}{|c|c|c|c|c|}\hline
        \multirow{2}{*}{\textbf{Operation}} &\multicolumn{4}{c|}{$|\textbf{DB}|$}\\\cline{2-5}
        & $\boldsymbol{2^{10}}$ & $\boldsymbol{2^{12}}$ & $\boldsymbol{2^{14}}$ & $\boldsymbol{2^{17}}$\\\hline\hline
        \textit{Total Communication}& $12.98$~KB & $25.99$~KB & $77.69$~KB & $605.92$~KB\\\hline
        \textit{Client's Storage} & $4.19$~KB & $17.2$~KB &$68.9$~KB &$597.13$~KB \\\hline
        \textit{PSB's Storage} & $4.1$~MB & $16.8$~MB & $67.3$~MB & $538.2$~MB\\\hline
\textit{Communication Delay} &$\approx 145$~ms &$\approx 175$~ms & $\approx 275$~ms & $\approx 650$~ms\\\hline
        \textit{Circuit RTT Latency} & \multicolumn{4}{c|}{$\approx 250$~ms} \\\hline
    \end{tabular}
     \vspace{-1mm}
    \caption{ Communication/Storage Overhead of $\pacdosq$} \vspace{-4mm}
    \label{tab:Communication}
\end{table}

\vspace{+1mm}
\noindent\textbf{Scalability Assessment:} We assess the performance of \pacdosq~for a growing number of clients and different PSB configurations, offering different privacy-speed trade-offs. Our evaluation combines computational and communication overhead to analyze the perceived end-to-end delay for the clients and PSB servers. The client can retrieve its puzzle along with spectrum availability information {\em offline}. The process involves fetching the signed puzzle from PSBs using multi-server PIR over PQ-Tor. The end-to-end delay encompasses the PIR computation on both the client and PSB sides plus the communication delays due to PQ-Tor when fetching a block of data from multiple PSB databases. The experimental analysis of \pacdosq~for numerous clients with various privacy configurations is depicted in Fig. \ref{fig:scalability}. Upon successfully retrieving the puzzle, the client can connect with the server efficiently by solving PoW and sending its solution with \textit{Token}. This (online) phase is swift, and mirrors standard client-server puzzle settings, with the key difference being that the request is transmitted through a PQ-secure TLS channel.




\begin{figure}
  \includegraphics[scale=0.26]{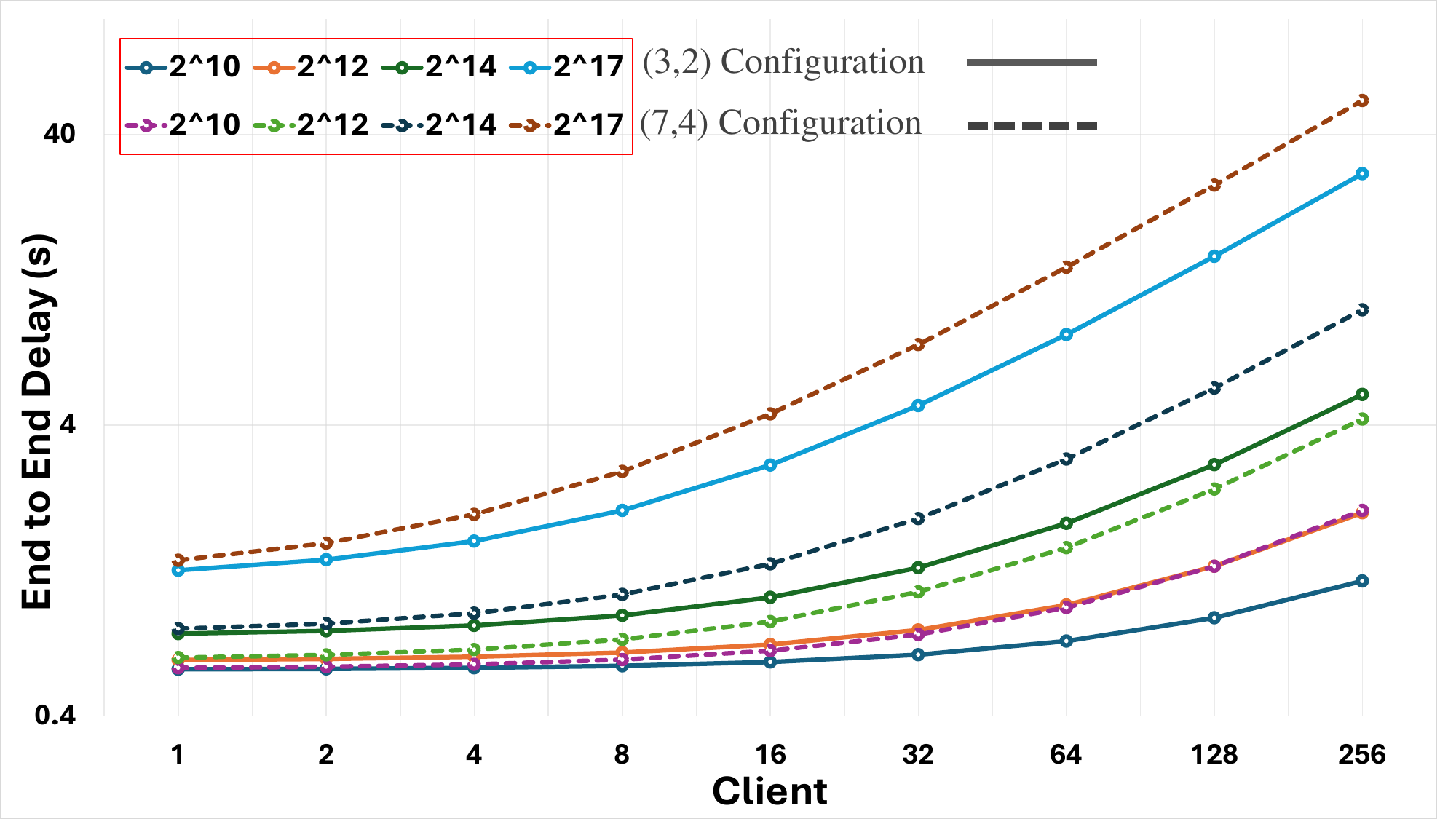}\vspace{-0.5mm}
  \caption{End-to-End delay of $\pacdosq$ for increasing clients.}
  \label{fig:scalability}
\end{figure}

\section{Conclusion} \label{sec:Conclusion}\vspace{-2mm}
We present \pacdosq, a novel cybersecurity framework designed to address the multifaceted challenges of security, privacy, and DoS attacks in SAS amidst the expanding mobile and IoT landscape and the looming threat of quantum computing. By integrating PSBs with multi-server PIR, PQ-secure Tor, and hash-based client-server puzzles, \pacdosq offers a comprehensive solution that ensures location privacy, anonymity, and resilience against DoS attacks in the PQ era. Formal security proofs validate the security of \pacdosq, while comprehensive performance evaluations underscore its feasibility and efficiency. As network services continue to evolve, \pacdosq~stands as an important step towards establishing a holistic cybersecurity framework safeguarding spectrum management systems from a myriad of cyber threats with reasonable overhead. 

\vspace{-2mm}
\section{Acknowledgment}\vspace{-2mm} This research is partially supported by the NSF CNS-2350213.\vspace{-3mm}


\bibliographystyle{ieeetr} 
\bibliography{SalehRef.bib}

\end{document}